\let\op=\operatorname
\newcommand{\INDSTATE}[1][1]{\State\hspace{1cm}}
\newlength\myindent
\newtheorem{observation}{Observation}
\newtheorem{fact}{Fact}
\newtheorem{formulation}{Formulation}
\newcommand{\NP}{\ensuremath{\mathbf{NP}}}
\newcommand{\PSPACE}{\ensuremath{\mathbf{PSPACE}}}
\newcommand{\size}[1]{\ensuremath{\left|#1\right|}}
\newcommand{\imply}{\supset}
\newcommand{\mil}{\ensuremath{\mathbf{M}_{\imply}}}
\newcounter{specline}
\newcommand{\spec}[3][{}]{%
   \setcounter{specline}{0}%
   \ifthenelse{\equal{#1}{numbers}}%
      {\setboolean{numberspec}{true}}%
      {\setboolean{numberspec}{false}}%
   \ensuremath{
      \begin{array}{l}
         \ifthenelse{\equal{#2}{}}
            {\numberedline}
            {#2\nl}
         #3
      \end{array}
   }
}
\def\P{\op{\mathbf{P}}}
\def\NP{\op{\mathbf{NP}}}
\def\CoNP{\op{\mathbf{CoNP}}}
\def\PSPACE{\op{\mathbf{PSPACE}}}
{\begin{myalgo}[#1]
\centering
\begin{minipage}{#2}
\begin{algorithm}[H]}%
{\end{algorithm}
\end{minipage}
\end{myalgo}}
\begin{document}
%\begin{frontmatter}
  \title{Yet another argument in favour of $\NP=\CoNP$}
  \author{Edward Hermann Haeusler$^\dagger$
           \\ Departmento de Inform\'{a}tica\\ PUC-Rio\\
              Rio de Janeiro, Brasil \\
              $^\dagger$Email:hermann@inf.puc-rio.br}

  \maketitle

   \begin{abstract}
     This article shows yet another proof of $\NP=\CoNP$. In a previous article we proved that $\NP=\PSPACE$ and from it we can conclude that $\NP=\CoNP$ immediatly. The former proof shows how to obtain polynomial and, polynomial in time ckeckable Dag-like proofs for all purely implicational Minimal logic tautologies. From the fact that Minimal implicational logic is $\PSPACE$-complete we get the proof that $\NP=\PSPACE$.  
  This first proof of $\NP=\CoNP$ uses Hudelmaier linear upper-bound on the height of Sequente Calculus minimal implicational logic proofs. In an addendum to the proof of $\NP=\PSPACE$ we observe that we do not need to use Hudelmaier upper-bound, since any proof of non-hamiltonicity for any graph is linear upper-bounded. By the $\CoNP$-completeness of non-hamiltonicity we obtain $\NP=\CoNP$ as a corollary of the first proof. In this article we show a third proof of $\CoNP=\NP$, also providing polynomial size and polynomial verifiable certificates that are Dags. They are generated from normal Natural Deduction proofs, linear height upper-bounded too, by removing redundancy, i.e., repeated parts. The existence of repeated parts is consequence of the redundancy theorem for family of super-polynomial proofs in the purely implicational Minimal logic. Its mandatory to read at least two previous articles to get the details of the proof presented here. The article that proves the redundabcy theorem and the article that shows how to remove the repeated parts of a normal Natural Deduction proof to have a polynomial Dag certificate for minimal implicational logic tautologies.

\end{abstract}

   \section{Introduction}

   In \cite{Exponential}, and \cite{SuperPoly}, we discuss the correlation between the size of proofs and how redundant they can be. A proof or logical derivation is redundant whenever it has sub-proofs that are repeated many times inside it. Articles \cite{Exponential} and \cite{SuperPoly} focus on Natural Deduction (ND) proofs in the purely implicational minimal logic $\mil$. This logic is PSPACE-complete. It polynomially simulates proofs in Intuitionistic Logic and full minimal logic.  The fact that $\mil$ has a straightforward language and only two ND rules is worthy of notice. 

   On the other hand, a theorem prover for $\mil$ is as hard to implement as any other propositional logic. We consider a polynomial-time theorem prover to be efficient, following Cook conjecture\footnote{Cook conjectured that a natural problem has a feasible algorithm iff it has polytime algorithm, see \cite{Cook2000}}. Due to its $\PSPACE$-completeness,  as long as we obtain an efficient implementation for a $\mil$ prover, we will deliver an efficient prover\footnote{Polynomial in time}  for any propositional logic that satisfies the sub-formula principle, see \cite{Haeusler2014}.  The efficiency of a Theorem prover is mainly related to storage usage and processing time. In this article, we focus on storage usage.  The storage use is proportional to the size of the proof generated by the theorem prover. A fairer option is to take the smallest proof of the proving theorem into account, indeed.  We wonder whether for every $\mil$ theorem the shortest proofs are polynomial sized.  We use the term ``short proof''  to denote any proof with the size bounded by o polynomial on their conclusion's size. Since there can be more than one proof with the same size, there also can be more than one shortest proof for a given formula.  This article uses a technique for obtaining short certificates for linear height-bound $\mil$ proofs to provide yet another proof of $\CoNP=\NP$.

   In section~\ref{sec:Background}, we discuss the formulation of the conjecture $\NP\stackrel{?}{=}\CoNP$ in terms of proof systems. The background theory and terminology come from \cite{HaeuslerHugeSmall}, \cite{Addendum}, \cite{Exponential}, \cite{SuperPoly}, \cite{BoSL} and \cite{Studia}.
   In section~\ref{relatedWork}, we briefly review the related work. Section~\ref{ProofNPCoNP} shows the sketch of the proof of $\NP=\CoNP$. Section~\ref{sec:Conclusion} discusses briefly the consequences of this proof and the relationship with the other proofs we know on this conjecture.

\section{Related work}\label{relatedWork}

Theoretically, if for every $\mil$ tautology there is at least one short proof then we have $\NP=\PSPACE$. The proof of a valid formula is a certificate for its validity. Of course, the proof belongs to some proof system. A more precise statement has to consider the proof system. In \cite{Krajicek}, Theorem 4.1.2 provides a more precise statement on the relationship between the existence of short proofs for all Classical tautologies in some (Classical) proposition proof system. In section~\ref{sec:Formulation} we discuss a bit more the statements related to $\NP=\PSPACE$ and $\NP=\CoNP$. We also make precise the statement we consider for proving that $\NP=\CoNP$. 

In \cite{BoSL} and \cite{Studia}, we show the existence of short certificates for every $\mil$ valid formula via compression of Natural Deduction proofs into Directed Acyclic Graphs (DAGs). We are aware that $\NP=\PSPACE$ implies that $\NP=\CoNP$. Thus, in \cite{BoSL} and \cite{Studia} we also have proved that $\NP=\CoNP$. 
Compressing proofs in $\mil$ can provide very good glues to compress proofs in any logic satisfying the sub-formula principle.  The Classical Propositional Logic is among these logics. In \cite{BoSL} and \cite{Studia}, we prove that for every \mil tautology $\alpha$ there is a two-fold certificate for the validity of $\alpha$ in \mil. The certificate is polynomially sized on the length of $\alpha$ and verifiable in polynomial time on this length too. The general approach, described in \cite{Studia} and \cite{BoSL},  to prove $\NP=\PSPACE$ obtains short certificates by compressing  Natural Deduction (ND) proofs into  DAGs that eliminate the repetition of redundant parts in the original ND proofs. It is well-known that $\NP=\PSPACE$ implies $\CoNP=\NP$ so we have $\CoNP=\NP$. Using articles \cite{Exponential}, \cite{SuperPoly}, and one of the appendixes of \cite{Exponential} or \cite{Addendum},  we can provide an alternative and more intuitive proof that $\NP=\CoNP$. Moreover, In \cite{Addendum},  we discuss a simpler proof of $\NP=\CoNP$, this time with a double certificate on linear height normal proofs, without to use Hudelmaier result, that is essential in \cite{BoSL}. Thus, this article provides a direct and more intuitive proof of $\NP=\CoNP$ using a method slightly different from what we use in \cite{BoSL} and \cite{Studia}.  

In \cite{HaeuslerHugeSmall}, we show how to use the inherent redundancy for huge proofs, see \cite{SuperPoly} for the redundancy theorem,  to have polynomial and polytime certificates by the removal of all redundant parts of the proofs. Indeed, we collapse all the redundant sub-proofs into only one occurrence. We start with tree-like Natural Deduction proofs and end up with a labelled r-DAG (rooted Directed Acyclic Graph). In section~\ref{ProofNPCoNP}, we explain what we said in the last two phrases in more detail. The use of the redundancy theorem and corollary shown in \cite{SuperPoly},  the essence of this proof's approach, does not seem to be easily adaptable to proof of $\NP=\PSPACE$, indeed. In \cite{BoSL} the linearly height upper-bounded proofs of the tautologies in \mil do not need to be normal proofs. 

In 2015, we divulgated the first version of our proof that $\NP=\PSPACE$ in the Foundations of Mathematics [FOM] forum on the internet. Many questions arose on the known exponential lower-bounds for propositional proofs in Classical Logic in Frege systems. One of the main sources of this research and the results are Reckhow thesis and Hrubes article. We report here why these lower-bounds do not disturb our result. There are two main questions. The first is that one can see Natural Deduction as a Frege system and hence the exponential lower-bounds could apply to them too. The second tries to relate the known speed-up of Lemon style Natural Deduction where proofs are lists, and the repetition of the hypothesis is not need, with regard to Gentzen/Prawitz proof-trees where a formula can have many occurrences as a hypothesis in a proof. Below we answer to both questions. Professors Richard Zach, Joe Shipman and Thimothy Chown raised most of the questions. All the discussion, including the answers below, are registered in the FOM records between August 2015
and September 2016.

The lower-bounds obtained in Reckow's thesis \cite{Reckhow} as
well in Hrubes paper, \cite{Hrubes}, is strongly based on the fact that an axiomatic
system is used and each line in the proof is a formula, opposed to set of
formulas.   The understanding that dag-like proof can be obtained  from a
Frege system by allowing  the use of a formula  more than once as a premise
is correct but does not apply directly to what we are doing and to
Lemmon style ND too.  One of the main features of ND is the discharging of
hypothesis, a device that Frege systems obtain by means of the Deduction
Theorem (a Meta-Theorem). The implication-introduction rule is a powerful
device to shorten proofs since we can have with its use the formation of
lemmas. Reckow's thesis, \cite{Reckhow}, discusses Natural Deduction in pages 64-68.  Below
we can read an excerpt of page 64, where  "this theorem"  is the Deduction
Theorem.
\begin{quotation}
The important thing to note about this theorem, however, is that it provides
a new kind of inference rule. This rule· allows one to infer something
about the derivability of a certain formula from a certain set of formulas by
showing
a (possibly easier) derivation of a slightly different formula from a slightly
different set of hypotheses. Thus, there is the potential, at least, for much
shorter derivations by appealing to this rule.
\end{quotation}

The solution Reckow provides, afterwards,  for incorporating Natural
Deduction to Frege-like systems was to consider each line, in a more
general Frege system approach, as a pair  $S\vdash A$, where $S$ is a set of
formulas and $A$ is a formula. In this way, the Deduction Theorem is turned
into an inference rule, infer  $S\vdash B \rightarrow A$ from $S,B\vdash A$.
But now, we have a problem with ensuring compression. There are
exponentially  many subsets of  sub formulas of the conclusion (the
sub-formula property).  The combinatorics changes against compression of
proofs if we consider this way of representing ND systems. There are proofs
of some formulas that have more than polynomially many pairs of the above
kind, we cannot ensure any more that a super-polynomially sized graph is
labelled with polynomially many labels (the lines), and hence there is at
least one  label (formula) that occurs polynomially many times in the
graph. This is essential in our approach that there are only polynomially
many  possible labels  (in fact there are only linearly many sub formulas
of the conclusion of the proof). The compression of the proof is provided
by collapsing these formulas that occur super-polynomially many times, and
they exist because of the super-polynomial gap between the number of labels
of the vertexes and the size of the graph itself.
Finally, if we consider Lemmon Style Natural Deduction, each line is a
formula plus the annotation indicating the premises (number line) used to
derive this formula.  As discussed above shows,  this system is more economical
regarded the (natural) solution proposed by Reckow thesis, that is in fact
quite close to those proof representations implemented by Gentzen's Sequent
Systems and Tableaux.

On the other hand, the use of Lemmon Style ND does not allow us to conclude
that the existent lower bound for Frege systems directly apply to ND.   A
dag-like proof, even in Lemmon Style,  cannot simply indicate that we are
using a formula more than once. Each formula has it is own natural
dependency set of formulas. We cannot collapse equal formulas derived  from
a different set of formulas without taking care of annotating the new
dependency, and if we take this control seriously, we end up in a dag-like
proof as defined in our article, and having to adjust the annotations to
polynomial size and so on.

We conclude that the exponential lower-bounds are for Frege systems. Because of the features analysed above, Frege systems do not polynomially simulate Dag-proofs that are the main representation used in our proof systems able to obtain short validity certificates.

%%    In \cite{Exponential}, we identify sets of huge proofs with sets of proofs that, when viewed as strings, have their length lower-bounded by some exponential function. Moreover, we can consider, without loss of generality,  proof/deductions, which are linearly height bounded, as stated in \cite{Studia2019}.  We prove that the exponentially lower-bounded $\mil$ proofs are redundant, in the sense that for each (super) exponential size proof in Natural Deduction in \mil, there is at least one sub-proof that occurs exponentially many times in it.
%% In \cite{SuperPoly},  we show that this result extends to the super-polynomial case, i.e., sets of proofs that are lower-bounded by any polynomial. In \cite{SuperPoly}, we prove that, in any set of super-polynomially lower-bounded proofs that are linear height-bounded, of \mil tautologies, almost every proof is redundant. Redundancy means that there is a sub-proof that occurs super-polynomially many times in it for almost every proof in this set of huge proofs. 

\section{On the formulation of the conjecture $\NP\stackrel{?}{=}\CoNP$ in terms of proofs}\label{sec:Background}\label{sec:Formulation}

  An alphabet is any non-empty and finite set of symbols. Given an alphabet, $\Sigma$, the set of all strings, including the empty string ($\epsilon$), is denoted by $\Sigma^{*}$. A formal language on the alphabet $\Sigma$ is any set $L\subseteq\Sigma^{*}$.

With the sake of a faster and more efficient presentation we take for free the definition of a (propositional) logic $L$ as a pair $\langle For_{L},\models_{L}\rangle$. $For_{L}\subseteq\Sigma_{L}^*$ is the set of formulas of $L$, where $\Sigma_{L}$ is the alphabet of the logic. $\models_{L}\subseteq 2^{For_{L}}\times For_{L}$ is the logical consequence relation that defines when given a set $\Delta\in 2^{For_L}$ and a formula $\alpha\in For_{L}$ if $\alpha$ is logical consequence of $\Delta$, $\Delta\models_{L}\alpha$ in symbols. We use $\models_{L}\alpha$ as a shorthand for $\emptyset\models_{L}\alpha$.
The set of valid formulas, or tautologies, is $Taut_{L}=\{\alpha:\mbox{$\models_{L}\alpha$ and $\alpha\in For_{L}$}\}$. A proof system for $L$ is a set $Prov_{L}\subseteq (\Sigma_{Prov}\cup\Sigma_{L})^*$, where $\Sigma_{Prov}$ is the proofs' alphabet, together with a function  $conc$ that maps members of $Prov_{L}$, the proofs, to formulas in $For_{L}$, their respective conclusions. Given a proof system $P$ for $L$, we define $\vdash_{Prov_{L}}\alpha$, iff, there is $\pi\in Prov_{L}$, such that, $conc(\pi)=\alpha$. $Prov_{L}$ is sound and complete, iff, for all $\alpha$, $\vdash_{L}\alpha$ if and only if $\models_{L}\alpha$. The size of a formula and a proof are the length or size of the respective strings. We use $\size{s}$ for denoting the size or  length of $s$,i.e., the number of occurrences of symbols in $s$,  a string. For any logic $L$, we require that $Prov_{L}$ is decidable and $conc$ is computable. We require more, the existence of polytime algorithms implementing each of them. We know that Classical, Intuitionistic and full Minimal Logic and \mil have such polytime algorithms, indeed. They form the scope of logics in this article.    

     In this section, we discuss the mathematical formulation of the conjecture $\NP\stackrel{?}{=}\CoNP$ regarding the size of the certificates for membership. Our definition of a proof system follows \cite{Krajicek}, but inside the scope of General Logics\cite{Meseguer, JLC}. Theorem 4.1.2, page 24 in \cite{Krajicek} provides a precise statement on the relationship between the existence of short proofs for all Classical tautologies in some (Classical) proposition proof system. We examine this statement and provide an equivalent formulation that is more suitable for our purposes. In the sequel, we state the primary definition of some complexity classes used in this article.  

 Due to technical facts, when analysing the complexity of algorithms we consider only languages on alphabets with at least two symbols\footnote{This is a consequence of a result, see \cite{Arora},  that says that if there is an $\NP$-complete unary language then $\NP=\P$}. We take for free the use of the  $\mathcal{O}$ notation.  Given a formal language $L\subseteq\Sigma^{*}$, we say that:
   \begin{enumerate}
   \item $L\in \P$ whenever there is an algorithm $\chi_{L}$, such that, for all $\omega\in\Sigma_{L}$:
     \[
     \chi_L(\omega)=\left\{\begin{array}{ll} 1 & \mbox{if $\omega\in L$} \\
     0 & \mbox{if $\omega\not\in L$} \end{array}\right.
     \]
     Moreover, $Steps(\chi_{L})\in\mathcal{O}(n^q)$, for some $q\in\mathbb{N}$
   \item $L\in \NP$\label{positive} whenever there is an algorithm $\varphi_{L}$, such that, for all $\omega\in\Sigma^{*}$, there is  $c_{\omega}\in\Sigma_c^{*}$, $\Sigma\subseteq\Sigma_c$, $\size{c_{\omega}}\leq \size{\omega}^q$, $1\leq q$ and we have that: $\omega\in L$ if and only if  $\varphi_{L}(\omega,c_{\omega})=1$  and $Steps(\varphi_{L})\in\mathcal{O}(n^p)$, $1\leq p$.
   \item $L\in \CoNP$\label{negative} whenever there is an algorithm $\varphi_{L}$, such that, for all $\omega\in\Sigma^{*}$, there is  $c_{\omega}\in\Sigma_c^{*}$, $\Sigma\subseteq\Sigma_c$, $\size{c_{\omega}}\leq \size{\omega}^q$, $1\leq q$ and we have that: $\omega\not\in L$ if and only if $\varphi_{L}(\omega,c_{\omega})=1$  and $Steps(\varphi_{L})\in\mathcal{O}(n^p)$, $1\leq p$. 
   \item $L\in \PSPACE$  whenever there is an algorithm $\chi_{L}$, such that, for all $\omega\in\Sigma_{L}$:
     \[
     \chi_L(\omega)=\left\{\begin{array}{ll} 1 & \mbox{if $\omega\in L$} \\
     0 & \mbox{if $\omega\not\in L$} \end{array}\right.
     \]
     Moreover, $Cells(\chi_{L})\in\mathcal{O}(n^q)$, for some $q\in\mathbb{N}$
   \end{enumerate}

The functions $Steps$ and $Cells$ compute the number of steps and memory usage of the algorithms that are their respective arguments. Their names come from its original terminology in terms of Turing Machines. 

   In the item~\ref{positive} we say that  $c_{\omega}$ is a  positive certificate for membership of $\omega$ to $L$, and in item~\ref{negative}, $c_{\omega}$ is a negative certificate for membership of $\omega$ to $L$, i.e., positive certificate for membership to $\overline{L}$\footnote{Given a formal language $L$over $\Sigma$, $\overline{L}=\Sigma^*-L$}. They are considered as easy certificates since their verification is polynomial in time and their sizes (lenght) is polynomial on $\omega$ too.

   Roughly speaking, $\NP$ is the set of formal languages that have easy, so to say, polynomially sized and polytime verifiable positive certificates for each language element. A language $L$, over $\Sigma$, is in $\CoNP$  if and only if it has an easy negative certificate to each  $\omega\in\Sigma^*$ such that $\omega\not\in L$. Due to Fact~\ref{Fact:InvFinito}, we can say that for all elements of a language $L\in \CoNP$, but finitely many, if they have easy positive certificates then $L\in \NP$.

   We have the following fact.

   \begin{fact}\label{Fact:InvFinito} Let $L$ and $L^{\prime}$ be formal languages, such that, $(L-L^{\prime})\cup(L^{\prime}-L)$ is finite. So $L\in \mathcal{C}$ iff $L^{\prime}\in\mathcal{C}$, for $\mathcal{C}$ = $\NP$, $\CoNP$, $\PSPACE$ and any other natural complexity class.
\end{fact}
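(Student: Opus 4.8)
The plan is to exploit that the hypothesis only lets $L$ and $L'$ disagree on a \emph{finite} set, and finitely much information can always be hard-wired into an algorithm without affecting its asymptotic resource usage. Write $D = (L - L') \cup (L' - L)$ for the symmetric difference; by hypothesis $D$ is finite, say $D = \{s_1,\dots,s_k\}$, and each $s_i$ has length at most some fixed constant $m$. The relation ``$(L-L')\cup(L'-L)$ is finite'' is symmetric in $L$ and $L'$, so it suffices to prove, say, $L' \in \mathcal{C} \Rightarrow L \in \mathcal{C}$, the converse following by swapping roles. The two finite objects I would fix once and for all are the sets $L \cap D$ and $D \setminus L$; they partition $D$ and record exactly the exceptional inputs on which the $L$-answer differs from the $L'$-answer.

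For $\mathcal{C} \in \{\P, \PSPACE\}$: given a decider $\chi_{L'}$ for $L'$, define $\chi_L$ by: on input $\omega$, first test whether $\omega$ is one of $s_1,\dots,s_k$ (a comparison against $k$ fixed strings of length $\le m$, costing only $O(1)$ beyond scanning the input); if $\omega \in L\cap D$ output $1$, if $\omega \in D\setminus L$ output $0$, otherwise output $\chi_{L'}(\omega)$. Correctness is immediate, since $\omega \notin D$ forces $\omega \in L \Leftrightarrow \omega \in L'$ while on $D$ the table is correct by construction. The extra work is additive and linear, so $Steps(\chi_L)$ (resp.\ $Cells(\chi_L)$) stays polynomial, giving $L \in \mathcal{C}$.

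For $\mathcal{C} \in \{\NP, \CoNP\}$: given the verifier $\varphi_{L'}$ with certificate family $\{c_\omega\}_{\omega\in\Sigma^*}$, keep $c'_\omega = c_\omega$ for $\omega \notin D$ and set $c'_\omega = \epsilon$ for the finitely many $\omega \in D$; the bound $\size{c'_\omega}\le\size{\omega}^q$ is inherited off $D$ and trivial on the finite remainder. Let $\varphi_L(\omega,c)$ first decide whether $\omega \in D$: accept if $\omega \in L\cap D$, reject if $\omega \in D\setminus L$, and otherwise return $\varphi_{L'}(\omega,c)$. The overhead is linear, so $Steps(\varphi_L)$ remains polynomial, and ``$\omega\in L$ iff $\varphi_L(\omega,c'_\omega)=1$'' (resp.\ its $\CoNP$ analogue) holds by the same case split; for $\CoNP$ one may instead note that $\overline{L}$ and $\overline{L'}$ also differ finitely and combine the $\NP$ case with $L\in\CoNP \Leftrightarrow \overline{L}\in\NP$.

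The clause ``and any other natural complexity class'' needs nothing new: the construction only uses that membership in a fixed finite set costs $O(1)$ extra resources and that the class's defining resource bound is closed under such additive overhead --- a property shared by every standard class. I do not expect a genuine obstacle here; the only point requiring care is the routine bookkeeping that patching finitely many inputs never pushes resource usage outside the defining bound, which holds automatically because the patch is of constant cost.
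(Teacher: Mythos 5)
Your proof is correct. The paper states this Fact without giving any proof at all, treating it as folklore, so there is nothing to compare against; your argument --- hard-wiring the finite symmetric difference $D$ as a constant-size lookup table in front of the decider or verifier for $L'$, patching the certificates on $D$, and observing that the additive overhead cannot violate any resource bound of a ``natural'' class --- is exactly the standard argument one would supply, and it correctly handles both the decider-based classes ($\P$, $\PSPACE$) and the certificate-based ones ($\NP$, $\CoNP$), including the reduction of the $\CoNP$ case to the $\NP$ case via complements.
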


 \begin{observation}\label{Obs:WLOG}  A certificate is hard if and only if it is not easy. Without loss of generality, hard certificates have super-polynomial sizes. This fact is a consequence of lemma~\ref{Lemma:HardCertificates}.
 \end{observation}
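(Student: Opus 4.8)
The plan is to unwind the definition of an \emph{easy} certificate and to notice that, of the two independent ways a certificate can fail to be easy, one always reduces to the other. Recall from items~\ref{positive} and~\ref{negative} that a certificate $c_{\omega}$ for the membership of $\omega$ is easy precisely when two conditions hold simultaneously: its size is polynomially bounded, $\size{c_{\omega}}\le\size{\omega}^{q}$ for some fixed $q$, \emph{and} the accompanying verifier $\varphi_{L}$ decides $\pair{\omega}{c_{\omega}}$ in time $\mathcal{O}(n^{p})$ for some fixed $p$. Negating the conjunction, a hard certificate is one that is either of super-polynomial size, or of polynomial size but verifiable only in super-polynomial time, or both. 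The work of the observation is to discard the intermediate case.

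To do that I would argue, as Lemma~\ref{Lemma:HardCertificates} will make precise, that a polynomially-sized but slowly-verified certificate $c_{\omega}$ can be replaced by the augmented certificate $\pair{c_{\omega}}{\tau}$, where $\tau$ is the computation history of $\varphi_{L}$ run on $\pair{\omega}{c_{\omega}}$. The new verifier only checks that $\tau$ is a legal accepting run of $\varphi_{L}$ on $\pair{\omega}{c_{\omega}}$, which is a local step-by-step consistency test and therefore runs in time polynomial in the length of its own input $\pair{\omega}{\pair{c_{\omega}}{\tau}}$; since $\varphi_{L}$ is assumed total and correct, the new verifier accepts iff the old one does, so $\pair{c_{\omega}}{\tau}$ certifies the same language. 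Now either $\size{\pair{c_{\omega}}{\tau}}$ is still polynomial in $\size{\omega}$, in which case the language in fact has \emph{easy} certificates for these inputs, or it is super-polynomial, and the only remaining obstruction to easiness is size. Hence, whenever a language fails to be in $\NP$ (respectively $\CoNP$), for infinitely many inputs every certificate is super-polynomially large: for the rest of the paper ``hard certificate'' may be read as ``super-polynomially sized certificate''. For the concrete proof systems used later --- Natural Deduction and sequent calculus for $\mil$ --- verification is already polytime by the standing assumption that $Prov_{L}$ is polynomially decidable, so there the reduction is vacuous and hardness is a matter of size by default.

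The delicate point --- and the one Lemma~\ref{Lemma:HardCertificates} has to pin down --- is the bookkeeping in the augmentation step: one must check that the new verifier is polynomial in the \emph{new} input length (it is, because $\size{\tau}$ dominates both the old running time and the cost of replaying $\tau$), that the map $\omega\mapsto\pair{c_{\omega}}{\tau}$ preserves exactly the accepted set, and that it uses nothing beyond the paper's standing hypothesis that a verifier is a total algorithm with a well-defined step count. None of this is deep, but it is exactly what turns the informal slogan ``hardness is, without loss of generality, about size'' into a statement one can cite.
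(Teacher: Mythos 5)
Your proposal is correct and follows essentially the same route as the paper: the paper justifies the observation by Lemma~\ref{Lemma:HardCertificates}, whose proof replaces a polynomially sized but slowly verified certificate $c_e$ by the execution trace $Tr(\mathcal{O}(c_e))$ of the verifier, which is super-polynomial in size by hypothesis yet checkable step-by-step in linear (hence polynomial) time --- precisely your augmentation by the computation history $\tau$. Your version is, if anything, slightly more explicit about the case analysis (and about what happens if the augmented certificate remains polynomial), but the key idea is identical.
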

 
   \begin{lemma}\label{Lemma:HardCertificates}
     Let  $L\in \NP$,  respectively $L\in \CoNP$,  be such that its elements, but finitely many,  have polynomial positive, respectively negative,  certificates.  Moreover, the best algorithm, $\mathcal{O}$, to verify the validity of the certificates, is not polynomially upper-bounded on the size of the certificates. There is a polynomial-time algorithm $\mathcal{A}$, such that,  for all elements of $L$, but finitely many, there are positive, respectively negative, super-polynomial size certificates can be checked in polynomial time by $\mathcal{A}$.
   \end{lemma}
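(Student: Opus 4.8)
The idea is the classical device of \emph{padding the certificate with the verifier's computation trace}: when the only available verification procedure is slow, one appends to the certificate a complete step-by-step record of that procedure's run on the certificate, so that a new verifier needs only to check that the record is locally consistent and ends in acceptance, which is fast. I would first reduce to the $\NP$ case, the $\CoNP$ case being obtained by running the same construction on $\overline{L}$ (negative certificates for $L$ being positive certificates for $\overline{L}\in\NP$) and appealing to Fact~\ref{Fact:InvFinito} for the finitely many exceptions. So let $L\in\NP$ and let $\mathcal{O}$ be a correct verification algorithm for $L$: for all but finitely many $\omega$ there is $c_\omega$ with $\size{c_\omega}\le\size{\omega}^q$ and $\omega\in L$ iff $\mathcal{O}(\omega,c_\omega)=1$, while by hypothesis the function $Steps(\mathcal{O})$ is not bounded by any polynomial. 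Realise $\mathcal{O}$ as a multi-tape Turing machine, so that ``configuration'' and ``successor configuration'' are well defined.

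Then, for each $\omega\in L$ outside the finite exceptional set, let $T$ be the number of steps $\mathcal{O}$ performs on $(\omega,c_\omega)$ and let $\mathrm{hist}_\omega=\langle\gamma_0,\ldots,\gamma_T\rangle$ be that run, where $\gamma_0$ is the initial configuration, $\gamma_{i+1}$ the successor of $\gamma_i$, and $\gamma_T$ an accepting halting configuration. Since a machine never visits more tape cells than the number of steps it makes, each $\gamma_i$ has length at most linear in $T$, so $\size{\mathrm{hist}_\omega}$ is at most quadratic in $T$: finite, but super-polynomial in $\size{\omega}$. Set $c'_\omega:=\langle c_\omega,\mathrm{hist}_\omega\rangle$; its size is therefore super-polynomial, exactly as Observation~\ref{Obs:WLOG} demands. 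The new verifier $\mathcal{A}$, on input $(\omega,z)$, parses $z$ as $\langle c,\langle\gamma_0,\ldots,\gamma_T\rangle\rangle$ (rejecting if malformed), checks that $\gamma_0$ is the initial configuration of $\mathcal{O}$ on $(\omega,c)$, checks for every $i<T$ that $\gamma_{i+1}$ follows from $\gamma_i$ by one transition of $\mathcal{O}$ (a purely local inspection around the tape heads, in time linear in $\size{\gamma_i}+\size{\gamma_{i+1}}$), checks that $\gamma_T$ is accepting, and accepts iff all checks succeed. Its total running time is linear in $\size{\omega}+\size{z}$, hence polynomial in the length of its entire input. For correctness: if $\mathcal{A}(\omega,z)=1$ then $z$ exhibits a genuine accepting run of $\mathcal{O}$ on $(\omega,c)$, so $\mathcal{O}(\omega,c)=1$ and therefore $\omega\in L$ by correctness of $\mathcal{O}$; conversely, for any such $\omega\in L$ we have $\mathcal{A}(\omega,c'_\omega)=1$ by construction. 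Thus $\mathcal{A}$ together with the certificates $c'_\omega$ is as required.

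There is no deep obstacle here; the care is all in the statement rather than in the argument. First, one must make explicit that ``polynomial-time $\mathcal{A}$'' is meant with respect to the length of $\mathcal{A}$'s whole input (instance together with certificate), not with respect to $\size{\omega}$ alone; otherwise, once the certificates are super-polynomial in the instance, the claim would be vacuous. Second, one should confirm that the padded history, although super-polynomial, is a legitimate finite string, bounded quadratically in $T$. Third, it is worth stressing that $\mathcal{A}$ is a single uniform algorithm that never needs to know, or to compute, the running time of $\mathcal{O}$: it simply follows whatever trace it is handed, so no advice or non-uniformity is smuggled in, and the dependence on $\mathcal{O}$ enters only through its fixed transition table.
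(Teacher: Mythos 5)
Your proposal is correct and is essentially the paper's own argument: the paper likewise takes the execution trace $Tr(\mathcal{O}(c_e))$ of the slow verifier on the original polynomial certificate as the new (super-polynomial) certificate, and checks it step by step in time linear in the trace. Your write-up merely fills in the details the paper leaves implicit (configuration-by-configuration checking, the quadratic bound on the trace length, the reduction of the $\CoNP$ case to the $\NP$ case, and the clarification that polynomial time is measured in the whole input including the certificate).
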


   \begin{proof}

   We consider, for each element of $e\in L$ the formal execution of $\mathcal{O}$ on its corresponding  polynomial certificate $c_e$. This execution, or better, the trace of $\mathcal{O}(c_e)$, namely $Tr(\mathcal{O}(c_e)$, is the certificate for the element $e$.  We can observe that any polynomial does not bound $Tr(\mathcal{O}(c_e))$ by hypothesis. Of course, the verification of a valid universal Turing machine execution is an algorithm that runs in linear, hence polynomial,  time on its trace. $Tr(\mathcal{O}(c_e))$  checking runs by checking that each step in the trace is valid.
   \end{proof}

   \begin{fact}\label{Fact:CoNPsubNP}
     If $\CoNP\subseteq\NP$ then $\CoNP=\NP$.
   \end{fact}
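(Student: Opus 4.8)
The plan is to derive the reverse inclusion $\NP\subseteq\CoNP$ from the hypothesis $\CoNP\subseteq\NP$, and then conclude equality by antisymmetry of set inclusion. The whole argument rests on one elementary closure property that is immediate from the definitions in items~\ref{positive} and~\ref{negative} above: for every formal language $L$ one has $L\in\CoNP$ if and only if $\overline{L}\in\NP$. Indeed, unwinding item~\ref{negative}, a negative certificate $c_{\omega}$ witnessing $\omega\notin L$ is, word for word, a positive certificate witnessing $\omega\in\overline{L}$, and the verifier $\varphi_{L}$, its running-time bound $\mathcal{O}(n^p)$, the certificate-size bound $\size{c_{\omega}}\leq\size{\omega}^q$, and the certificate alphabet $\Sigma_c$ all transfer verbatim; the converse direction is the same computation read backwards. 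So I would first state and prove this biconditional as a small preliminary observation.

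Next, assuming $\CoNP\subseteq\NP$, I would take an arbitrary $L\in\NP$ and chase it through the closure property twice. Since $L\in\NP$ and $\overline{\overline{L}}=L$, the closure property applied to the language $\overline{L}$ gives $\overline{L}\in\CoNP$. The hypothesis $\CoNP\subseteq\NP$ then yields $\overline{L}\in\NP$. Applying the closure property once more, now to $L$ itself and again using $\overline{\overline{L}}=L$, gives $L=\overline{\overline{L}}\in\CoNP$. As $L$ was an arbitrary member of $\NP$, this establishes $\NP\subseteq\CoNP$. Combining this with the hypothesis $\CoNP\subseteq\NP$ gives $\NP=\CoNP$, which is the claim.

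I do not expect a genuine obstacle here: the statement is the standard fact that $\NP$ and $\CoNP$ are either equal or incomparable, and the only point that needs care is checking that the definitions of $\NP$ and $\CoNP$ adopted in items~\ref{positive}--\ref{negative}, with their particular quantifier structure over $\Sigma_c$, $p$ and $q$, really are exact mirror images under complementation, so that the closure property holds with no loss in the polynomial exponents. Since the two definitions differ only in replacing ``$\omega\in L$'' by ``$\omega\notin L$'', this mirroring is exact, and the proof goes through as sketched. One could alternatively lean on the robustness expressed by Fact~\ref{Fact:InvFinito}, but it is not needed: the argument above does not even disturb the constants.
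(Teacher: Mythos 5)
Your proposal is correct and follows essentially the same route as the paper: both arguments rest on the duality $L\in\NP$ iff $\overline{L}\in\CoNP$ (immediate from the mirror-image definitions in items~\ref{positive} and~\ref{negative}) and chase an arbitrary $L\in\NP$ through $\overline{L}\in\CoNP\subseteq\NP$ to conclude $L\in\CoNP$, hence $\NP\subseteq\CoNP$ and equality. Your write-up merely makes explicit the closure property and the double-complementation step that the paper's one-line chain leaves implicit.
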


   \begin{proof}
        \[
  \begin{array}{c}
    \mbox{Assume} \CoNP\subset \NP \\
    \Downarrow \\
\mbox{$L\in \NP$, iff, $\overline{L}\in \CoNP$, $\CoNP\subseteq  \NP$ so $\overline{L}\in \NP$, iff, $L\in \CoNP$}\\
    \Downarrow \\
CoNP=NP
  \end{array}
  \]
  \end{proof}

As already mentioned, if we prove that a language $L\in \CoNP$ has easy certificates for all, but finitely many, elements then we have that  $L\in\NP$. Moreover,  if the language $L$ is $\CoNP$-complete then we have $\CoNP\subseteq \NP$ 
By fact~\ref{Fact:CoNPsubNP} we would have that $\CoNP=\NP$. We remember that $Taut_{Cla}$ is $\CoNP$ complete. 
Thus, $\CoNP=\NP$ if and only if, for all, but finitely many, classical tautologies,  they do have easy (positive) certificates. Equivalently, by  observation~\ref{Obs:WLOG},  $CoNP\neq\NP$ if and only if for all, but finitely many, formulas $\varphi_{e}\in Taut_{Cla}$, the smallest certificate for $\varphi_{e}$ is not upper-bounded by any polynomial on the size of $\varphi_{e}$. Thus, we have a possible formulation of $\NP\neq\CoNP$.

\begin{formulation}\label{For:First}
$\NP\neq \CoNP$  iff for each proof system $\mathcal{P}$ for the Classical Logic, for all $\pi\in\mathcal{P}$, but finitely many, $\pi$ is not upper-bounded by any polynomial on the size of $conc(\pi)$.  
\end{formulation}

The formulation above is equivalent to the following. We first define the function $F_{\mathcal{P}}$, given a sound and complete proof system $\mathcal{P}$ for the classical logic.
\[
F_{\mathcal{P}}(n)=Min(\{\size{\pi}:\mbox{$conc(\pi)=\alpha$, $\pi\in Prov_{\mathcal{P}}$ and $n=\size{\alpha}$}\})
\]

Thus, we reach our formulation of the statement for $\NP\neq\CoNP$.

\begin{formulation}\label{For:NPneqCoNP}  $\NP\neq\CoNP$ if and only if for every proof system $\mathcal{P}$ for the Classical Logic, $F_{\mathcal{P}}$ is a super-polynomial function from $\mathbb{N}$ into $\mathbb{N}$
  \end{formulation}

We can refine the above Formulation~\ref{For:NPneqCoNP} a bit more.  In \cite{Arora}, we can find a  well-known reduction, used in \cite{Addendum} and appendix of \cite{Exponential} that for each graph $G$ assigns a propositional formula $\alpha_{G}$, where $\alpha_G$ is satisfiable iff $G$ is Hamiltonian. In \cite{Addendum} and \cite{Exponential}, appendix in pages 17--22, we first observe that $\alpha_G$ is unsatisfiable if $G$ is not Hamiltonian. Due to Glyvenko theorem\footnote{Glyvenko theorem is: For any propositional formula $\varphi$, $\vdash_{Cla}\neg\varphi$ if and only if $\vdash_{Int}\neg\varphi$}, we can state that $\neg\alpha_G$ is an intuitionistic tautology iff $G$ is not Hamiltonian. Moreover, we show how to define the formula $\beta_G$, deductively equivalent to $\neg\alpha_G$ in minimal logic. $\beta_G$ uses only the implication connective and a new propositional letter $q$, that plays the role of the absurdity logical constant. $\beta_G$ has polynomial length when compared to the number of vertexes of the graph $G$. The height of the proofs of $\beta_G$, whenever $G$ is non-hamiltonian, is linear on the number of vertexes of $G$, and it is a normal proof in \mil too.    
Let $NHam$ be the set of all proofs of non-hamiloniticity for all non-hamiltonian graphs $G$. Hence, $NHam$ is the set of all proofs of $\beta_G$ for non-hamiltonian graphs $G$. Without loss of generality, we only consider normal proofs with height linear bounded. We have then the following proposition used in this article and in \cite{SuperPoly}  is a consequence of Formulation~\ref{For:NPneqCoNP}.

\begin{proposition}\label{Prop:NPneqCoNP}
  Consider the set $NHam$ of the all linear height-bounded normal proofs in \mil of the formulas $\beta_G$, for $G$ non-hamiltonian.
  If $\NP\neq\CoNP$ then the function:
  \[
  F_{NHam}(n)=Min(\{\size{\pi}:\mbox{$\pi$ is a proof of $\beta_G$ and $n=\size{\beta_G}$}\})
  \]
  is super-polynomial
\end{proposition}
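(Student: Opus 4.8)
\emph{Approach.} The statement is proved by contraposition: I assume that $F_{NHam}$ is \emph{not} super-polynomial and deduce $\NP=\CoNP$, contradicting the hypothesis $\NP\neq\CoNP$. If $F_{NHam}$ is not super-polynomial then there are constants $c,k\in\mathbb{N}$ with $F_{NHam}(n)\le c\,n^{k}$ for all but finitely many lengths $n$ realised by some $\beta_G$; discarding those finitely many graphs is harmless by Fact~\ref{Fact:InvFinito} and Observation~\ref{Obs:WLOG}. Unravelling the definition of $F_{NHam}$, this says: every non-hamiltonian graph $G$ (outside a finite set) has a normal, linear-height \mil-proof $\pi_G$ of $\beta_G$ with $\size{\pi_G}\le c\,\size{\beta_G}^{k}$. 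Since the reduction recalled just before the statement produces $\beta_G$ in polynomial time, $\size{\beta_G}$ is polynomial in the number of vertices of $G$, and hence $\size{\pi_G}$ is polynomial in $\size{G}$.

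The core step is then to show that the non-hamiltonicity problem $\overline{\textrm{HAM}}=\{\,G : G\text{ is not Hamiltonian}\,\}$ lies in $\NP$, using $\pi_G$ as the positive certificate for $G\in\overline{\textrm{HAM}}$. The verifier $\varphi$, on input $\pair{G}{\pi}$, does three things: (i) it recomputes $\beta_G$ from $G$ via the reduction; (ii) it computes $conc(\pi)$ and checks $conc(\pi)=\beta_G$; (iii) it checks that $\pi$ is a well-formed normal natural deduction derivation of \mil respecting the linear height bound. Step (i) is polynomial because the reduction is; step (ii) is polynomial because $conc$ is polytime and $\beta_G$ has polynomial size (both by the standing requirements on proof systems fixed in Section~\ref{sec:Formulation}); step (iii) is polynomial in $\size{\pi}$ because verifying a normal ND derivation reduces to a local correctness check at each inference together with the normality and height bookkeeping. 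By the equivalences recalled before the statement --- the reduction, Glyvenko's theorem, and the embedding of $\neg\alpha_G$ into \mil with $q$ in the role of absurdity --- the formula $\beta_G$ is \mil-provable iff $G$ is not Hamiltonian; hence $\varphi(\pair{G}{\pi})=1$ for some $\pi$ iff $G\in\overline{\textrm{HAM}}$, and by the first paragraph the witnessing $\pi_G$ has size polynomial in $\size{G}$. Thus $\overline{\textrm{HAM}}\in\NP$.

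To finish, recall that $\overline{\textrm{HAM}}$ is $\CoNP$-complete (as the complement of the $\NP$-complete problem $\textrm{HAM}$; equivalently, via the $\CoNP$-completeness of $Taut_{Cla}$ and the reduction $G\mapsto\neg\alpha_G$). Since $\NP$ is closed under polynomial-time many-one reductions, every language in $\CoNP$ reduces to $\overline{\textrm{HAM}}\in\NP$ and therefore belongs to $\NP$, i.e. $\CoNP\subseteq\NP$. Fact~\ref{Fact:CoNPsubNP} then gives $\CoNP=\NP$, contradicting $\NP\neq\CoNP$; hence $F_{NHam}$ must be super-polynomial. (In the terminology of Formulation~\ref{For:NPneqCoNP}, the conclusion $\CoNP=\NP$ just obtained is equivalent to the existence of a classical proof system $\mathcal{P}$ with $F_{\mathcal{P}}$ polynomially bounded; the argument above is the concrete route from the hypothesised short \mil-proofs of the $\beta_G$ to that conclusion, which is why the proposition counts as a consequence of that formulation.)

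\emph{Main obstacle.} The point that needs the most care is step (iii): one must be certain that the proofs collected in $NHam$ are genuinely objects whose well-formedness, normality and linear-height property can be certified in time polynomial in their own size --- i.e. that $NHam$ together with $conc$ is a proof system in the precise sense of Section~\ref{sec:Formulation}. For tree-like normal ND derivations this holds, but, as the discussion in Section~\ref{relatedWork} shows, it is exactly the feature that would be lost under a Reckhow-style presentation with sequents $S\vdash A$ as lines, so it is worth making explicit. A lesser nuisance is purely quantitative: $F_{NHam}$ is defined only on the sparse set of lengths of the $\beta_G$, and ``not super-polynomial'' must be converted into an honest polynomial bound valid for all but finitely many such lengths; both slack points are absorbed by Fact~\ref{Fact:InvFinito} and Observation~\ref{Obs:WLOG}.
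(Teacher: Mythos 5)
The paper never writes out a proof of this proposition --- it is asserted as ``a consequence of Formulation~\ref{For:NPneqCoNP}'' together with the certificate discussion that precedes it --- and your contrapositive argument (short normal proofs of the $\beta_G$ serve as polynomially verifiable $\NP$-certificates for non-hamiltonicity, hence $\CoNP\subseteq\NP$ by $\CoNP$-completeness of non-hamiltonicity, hence $\CoNP=\NP$ by Fact~\ref{Fact:CoNPsubNP}) is exactly the route the paper intends and sketches again in Section~\ref{ProofNPCoNP}. Your verifier, the appeal to Fact~\ref{Fact:InvFinito} for the finitely many exceptions, and your explicit conversion of ``not super-polynomial'' into a polynomial bound holding cofinitely are all consistent with how the paper uses these notions.

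One step deserves more scrutiny than you give it. Your ``unravelling'' of the definition reads $F_{NHam}(n)\le c\,n^{k}$ as ``\emph{every} non-hamiltonian $G$ with $\size{\beta_G}=n$ has a proof of size at most $c\,n^{k}$.'' As displayed, however, $F_{NHam}(n)$ is a $Min$ taken over \emph{all} proofs of \emph{all} formulas $\beta_G$ of length $n$, so a polynomial bound on it literally says only that, for each length $n$, the \emph{easiest} such graph admits a short proof. That is useless for the $\NP$-verifier: membership of $\overline{\textrm{HAM}}$ in $\NP$ needs a short certificate for every non-hamiltonian graph, and indeed under the literal $Min$ reading the proposition would likely fail independently of $\NP$ vs.\ $\CoNP$ (some non-hamiltonian graph of each length plausibly has a trivially short proof). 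The quantity the argument actually needs is the maximum, over graphs $G$ with $\size{\beta_G}=n$, of the minimal proof size of $\beta_G$ --- the Cook--Reckhow notion of a polynomially bounded proof system. This is a defect of the paper's definition (the same issue afflicts $F_{\mathcal{P}}$ in Formulation~\ref{For:NPneqCoNP}) which you silently repair rather than a flaw original to your argument, but since your proof hinges on precisely that reading, you should state the max--min correction explicitly before invoking it.
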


In \cite{SuperPoly}, proposition 19 in appendix B, page 22, can be applied to conclude that $F_{NHam}$ is super-polynomial if and only if $NHam$ is an unlimited set of super-polynomial proofs. From $\NP\neq\CoNP$ we conclude that $NHam$ is a family of super-polynomial proofs. See \cite{SuperPoly} to discuss the concept of a family of super-polynomial proofs and family of proofs that have super-polynomial lower-bounds. With the sake of facilitating to read  this article,  we write down below the definition of a set of super-polynomial proofs and the proposition 19, here state and contextualized for this presentation as proposition~\ref{prop:EquivSuperSuper} 

\begin{definition}\label{def:unlimited}
  A set $\mathcal{S}$ of Natural Deduction proof-trees is unlimited, if and only if, for every $n>0$ there is $\Pi\in \mathcal{S}$, such that, $\size{\Pi}>n$.
\end{definition}

In the following definition, $c(\Pi)$ denotes the formula that is the conclusion of $\Pi$.

\begin{definition}\label{def:FuncaoF}
  Let $\mathcal{S}$ be an unlimited set of N.D. proof-trees. Let  $\mathcal{S}_{m}(x)$ be the predicate $(x\in\mathcal{S}\;\land\;\size{c(x)}= m)$, for $0<m\in\mathbb{N}$. We define the function $F_{\mathcal{S}}:\mathbb{N}\longrightarrow\mathbb{N}$ that associates do each natural number $m$ the size of one of the least N.D. proof-tree $x$ satisfying $\mathcal{S}_{m}(x)$.
  \[
  F_{\mathcal{S}}(m)=\left\{\begin{array}{ll} 0 & \mbox{if $m=0$} \\
                                      Min(\{\size{x}:\mathcal{S}_{m}(x)\}) & \mbox{if $m>0$}               
  \end{array}\right.
  \]
\end{definition}

\begin{definition}\label{def:Superpoly} Let $\mathcal{S}$ be a set of Natural Deduction proofs, such that:
  $\Pi\in \mathcal{S}$ if and only if  ($\forall p\in\mathbb{N}$, $p>0$, $\exists n_0$,$\forall n>n_0$, $\size{c(\Pi)}=n$ and $\size{\Pi}>n^p$)
 In this case, we say that $\mathcal{S}$ is a set of super-polynomially sized ND proofs.
\end{definition}

 The following proposition explains why the name we used in Definition~\ref{def:Superpoly}

 \begin{proposition}\label{prop:EquivSuperSuper}
   Let $\mathcal{S}$ be an unlimited set of ND proof-trees. We have that $\mathcal{S}$ is a set of super-polynomially sized proofs if and only if
   $F_{\mathcal{S}}$ is a super-polynomial function from $\mathbb{N}$ into $\mathbb{N}$.
   \end{proposition}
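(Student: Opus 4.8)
The plan is to prove the two implications separately, working directly from Definitions~\ref{def:FuncaoF} and~\ref{def:Superpoly} and keeping in mind that $\mathcal{S}$ is assumed unlimited, so $F_{\mathcal{S}}(m)$ is well-defined (the minimum is taken over a nonempty set) for infinitely many $m$, and in particular $F_{\mathcal{S}}$ is not eventually $0$.

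\medskip
\textbf{($\Rightarrow$) Suppose $\mathcal{S}$ is a set of super-polynomially sized ND proofs.} I want to show $F_{\mathcal{S}}$ is super-polynomial, i.e.\ for every $p>0$ there is $n_0$ such that $F_{\mathcal{S}}(n)>n^p$ for all $n>n_0$ at which $F_{\mathcal{S}}$ is defined (equivalently, for which some proof in $\mathcal{S}$ has conclusion of size $n$). Fix $p$. By Definition~\ref{def:Superpoly}, \emph{every} $\Pi\in\mathcal{S}$ satisfies: for this $p$ there is $n_0^{\Pi}$ such that whenever $\size{c(\Pi)}=n>n_0^{\Pi}$ we have $\size{\Pi}>n^p$. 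The subtlety is that $n_0^{\Pi}$ a priori depends on $\Pi$, so I would first argue this dependence can be removed. Actually the cleanest reading of Definition~\ref{def:Superpoly} is that the condition ``$\size{c(\Pi)}=n$'' pins $n$ down as a single value (the size of $\Pi$'s conclusion), so the quantifier $\forall n>n_0$ is really the assertion ``$n_0<\size{c(\Pi)}$ implies $\size{\Pi}>\size{c(\Pi)}^p$''; hence for any $\Pi$ with $\size{c(\Pi)}$ large enough (larger than the single threshold obtained by, say, collecting finitely many bad cases) we get $\size{\Pi}>\size{c(\Pi)}^p$. Taking $n_0$ to be that uniform threshold: for any $n>n_0$ on which $F_{\mathcal{S}}$ is defined, pick a minimizing $\Pi$ with $\mathcal{S}_n(\Pi)$ and $\size{\Pi}=F_{\mathcal{S}}(n)$; since $\size{c(\Pi)}=n>n_0$ we get $F_{\mathcal{S}}(n)=\size{\Pi}>n^p$. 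This gives super-polynomiality of $F_{\mathcal{S}}$.

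\medskip
\textbf{($\Leftarrow$) Suppose $F_{\mathcal{S}}$ is super-polynomial.} I must show every $\Pi\in\mathcal{S}$ satisfies the defining condition of Definition~\ref{def:Superpoly}. Let $\Pi\in\mathcal{S}$ and let $p>0$. Because $F_{\mathcal{S}}$ is super-polynomial, there is $n_0$ with $F_{\mathcal{S}}(n)>n^p$ for all $n>n_0$ at which $F_{\mathcal{S}}$ is defined. Now $\mathcal{S}_{\size{c(\Pi)}}(\Pi)$ holds, so $F_{\mathcal{S}}(\size{c(\Pi)})$ is defined and, by minimality, $F_{\mathcal{S}}(\size{c(\Pi)})\le\size{\Pi}$. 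Hence if $\size{c(\Pi)}>n_0$ then $\size{\Pi}\ge F_{\mathcal{S}}(\size{c(\Pi)})>\size{c(\Pi)}^p$, which is exactly the clause required of $\Pi$ in Definition~\ref{def:Superpoly} (with $n=\size{c(\Pi)}$ the unique relevant value of $n$). Since $p$ was arbitrary, $\Pi$ meets the condition, and since $\Pi\in\mathcal{S}$ was arbitrary, $\mathcal{S}$ is a set of super-polynomially sized proofs.

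\medskip
\textbf{Main obstacle.} The only genuinely delicate point is the interplay of the quantifiers in Definition~\ref{def:Superpoly}: the inner ``$\forall n>n_0,\ \size{c(\Pi)}=n\Rightarrow\size{\Pi}>n^p$'' forces $n$ to equal the single number $\size{c(\Pi)}$, so the statement about a single proof $\Pi$ is really a statement about one value of $n$, and the move to a \emph{uniform} threshold $n_0$ valid for all $\Pi$ (needed in the $\Rightarrow$ direction to conclude about the function $F_{\mathcal{S}}$, which ranges over all sizes $n$) has to be justified by noting that for each $n$ there are, up to what matters, only the conclusion sizes realized in $\mathcal{S}$, and the finitely many ``small'' $n$ are harmless by the ``but finitely many''/unlimitedness setup. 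I would write this bookkeeping carefully but expect no real difficulty beyond it; the rest is immediate from the minimality in the definition of $F_{\mathcal{S}}$.
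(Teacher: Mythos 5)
The paper never actually proves this proposition --- it is imported verbatim from [SuperPoly, Prop.\ 19, Appendix B] --- so I can only judge your argument on its own terms. Your ($\Leftarrow$) direction is correct and is the standard argument: $\Pi$ witnesses that $F_{\mathcal{S}}(\size{c(\Pi)})$ is defined, minimality gives $F_{\mathcal{S}}(\size{c(\Pi)})\le\size{\Pi}$, and super-polynomiality of $F_{\mathcal{S}}$ transfers to $\Pi$. The bookkeeping about $F_{\mathcal{S}}$ being undefined (empty minimum) at conclusion sizes not realized in $\mathcal{S}$ is a reasonable way to handle a genuine defect in Definition~\ref{def:FuncaoF}.

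The gap is in the ($\Rightarrow$) direction, precisely at the uniformization step you flag as ``delicate'' and then wave through. Under the reading you adopt --- per-proof thresholds, with ``$\size{c(\Pi)}=n$ and $\size{\Pi}>n^p$'' repaired to an implication --- the membership condition in Definition~\ref{def:Superpoly} is satisfied by \emph{every} proof $\Pi$ whatsoever: simply take $n_0^{\Pi}=\size{c(\Pi)}$, and the clause ``$\forall n>n_0$, $\size{c(\Pi)}=n\Rightarrow\size{\Pi}>n^p$'' holds vacuously. So the hypothesis of ($\Rightarrow$) carries no information, and the implication is false (take $\mathcal{S}$ to be an unlimited set of linear-size proofs of trivially growing tautologies; it satisfies the per-proof condition vacuously, yet $F_{\mathcal{S}}$ is not super-polynomial). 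Your proposed repair --- ``collecting finitely many bad cases'' into a single threshold --- does not work, because there is one potential bad case for each conclusion size $n$, hence infinitely many; nothing bounds how fast the per-proof thresholds $n_0^{\Pi_n}$ of the minimizing proofs grow with $n$. The only reading under which the proposition is true and your ($\Rightarrow$) argument goes through is the one where the quantifier order in Definition~\ref{def:Superpoly} is $\forall p\,\exists n_0\,\forall\Pi\in\mathcal{S}$, i.e.\ the threshold is uniform over the whole set; you should state explicitly that you are adopting that reading rather than deriving the uniform threshold from the per-proof one, since no such derivation exists.
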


We provide in the next section a precise argument for $\NP=\CoNP$. This proof's strategy is use Formulation~\ref{For:NPneqCoNP}  and Proposition~\ref{prop:EquivSuperSuper} to conclude that $NHam$ is a set of super-polynomially sized proofs, from the hypothesis that $\CoNP\neq\NP$. From the fact that $NHam$ is a set of super-polynomial proofs, according to Definition~\ref{def:Superpoly}, we reach a contradiction, concluding that $\NP=\CoNP$. 

\section{A proof for $CoNP=NP$}\label{ProofNPCoNP}

In the previous section, we show that when considering the complexity class $\CoNP$, we can only consider linearly height-bounded normal proofs in \mil. The proofs, in \mil,  of the non-hamiltonianicity of graphs, are linearly height bounded. See the appendix in ~\cite{Exponential} or ~\cite{Addendum} for a detailed explanation on this. If $NP\neq CoNP$ then the set of non-hamiltonian graphs has no polynomially sized and verifiable in polynomial time certificate for each of its elements. Hence, by assuming that $NP\neq CoNP$, we have to conclude that $NHam$ is a family of normal super-polynomial proofs with linear height, see proposition~\ref{Prop:NPneqCoNP} in the last section. If we consider any proof in $NHam$, either it is polynomially sized, and we have nothing to prove, or it is bigger than 
 $m^p$, for some $p>3$, where $m$ is the size of the proof's conclusion. We observe that the case $p\leq 3$ is subsumed by $p>3$, anyway.  We can apply  Theorem of redundancy, theorem 14 in \cite{SuperPoly}, to show that this big proof is redundant so that we can apply the compression algorithm, algorithm 2 in \cite{HaeuslerHugeSmall}, to obtain a correct rDagProof of size smaller than $m^p$, according to Lemma 15 in \cite{HaeuslerHugeSmall}. Finally, Algorithm 3 in \cite{HaeuslerHugeSmall}, page 32,  can check the correctness of this polynomially sized rDagProof in time upper-bounded by $m^{4p}$.

 We provide a proof that shows a polynomial certificate for each non-hamiltonicity of each non-hamiltonian graph. We can check that each of these polynomial certificates is a (correct) certificate in polynomial time too, applying algorithm 3 in \cite{HaeuslerHugeSmall}. We can conclude that $CoNP\subseteq NP$, since non-hamiltonicity of graphs is a $CoNP$-complete problem. 

 \begin{theorem}
   $\NP=\CoNP$
   \end{theorem}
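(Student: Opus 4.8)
The plan is to string together the machinery already set up in the preceding sections, reducing everything to a contradiction with the super-polynomial lower bound that $\NP\neq\CoNP$ would force. First I would assume, for contradiction, that $\NP\neq\CoNP$. By Fact~\ref{Fact:CoNPsubNP} this is equivalent to $\CoNP\not\subseteq\NP$, and since $Taut_{Cla}$ is $\CoNP$-complete, via the reduction through Glyvenko's theorem and the construction of $\beta_G$ recalled in the excerpt, this forces the set $NHam$ of linear-height normal \mil-proofs of the formulas $\beta_G$ (for $G$ non-hamiltonian) to be a family of super-polynomially sized proofs in the sense of Definition~\ref{def:Superpoly} --- this is exactly Proposition~\ref{Prop:NPneqCoNP} combined with Proposition~\ref{prop:EquivSuperSuper}.

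Next I would take an arbitrary proof $\Pi\in NHam$ of some $\beta_G$ with $m=\size{\beta_G}$. If $\size{\Pi}\leq m^p$ for the relevant polynomial degree there is nothing to do; otherwise $\size{\Pi}>m^p$ for some $p>3$. Here I invoke the redundancy theorem (Theorem~14 of \cite{SuperPoly}): a proof this large, being linear-height and normal, must contain repeated sub-proofs, i.e. it is redundant. Then I apply the compression algorithm (Algorithm~2 of \cite{HaeuslerHugeSmall}), which collapses the repeated sub-proofs into single occurrences, yielding a correct rDagProof whose size is below $m^p$ by Lemma~15 of \cite{HaeuslerHugeSmall}. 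Finally, Algorithm~3 of \cite{HaeuslerHugeSmall} verifies the correctness of this polynomially-sized rDagProof in time bounded by $m^{4p}$, so every non-hamiltonian graph $G$ has a polynomially-sized, polynomial-time-checkable certificate of its non-hamiltonicity.

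This shows $NHam$ in fact admits polynomial certificates for all but finitely many of its conclusions, hence (by the remarks following Fact~\ref{Fact:CoNPsubNP} and by $\CoNP$-completeness of non-hamiltonicity) $\CoNP\subseteq\NP$, which by Fact~\ref{Fact:CoNPsubNP} gives $\CoNP=\NP$. But this directly contradicts the assumption $\NP\neq\CoNP$ --- equivalently, it contradicts the conclusion drawn from Formulation~\ref{For:NPneqCoNP} that $F_{NHam}$ is super-polynomial, since we have just produced sub-$m^p$ certificates. Therefore $\NP=\CoNP$.

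The main obstacle, and the place where all the real content sits, is the claim that a sufficiently large linear-height normal \mil-proof must be redundant in precisely the form that Algorithm~2 can exploit --- i.e. that the super-polynomial gap between proof size and the (linearly many) possible sub-formula labels forces enough repetition to compress the proof below $m^p$. That step rests entirely on Theorem~14 of \cite{SuperPoly} and Lemma~15 of \cite{HaeuslerHugeSmall}, so the burden of the argument is really discharged by those external results; the delicate points are checking that the hypotheses of those theorems (normality, linear height, the size threshold $p>3$) are met by every member of $NHam$, and that the verification bound $m^{4p}$ genuinely stays polynomial in $m$ for fixed $p$. I would be careful to note explicitly that the bound $p>3$ loses no generality since the case $p\leq 3$ is subsumed, and that the finitely-many exceptional graphs are absorbed by Fact~\ref{Fact:InvFinito}.
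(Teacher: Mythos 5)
Your proposal is correct and follows essentially the same route as the paper's own argument: assume $\NP\neq\CoNP$, use Proposition~\ref{Prop:NPneqCoNP} to force $NHam$ to be a super-polynomial family of linear-height normal proofs, then apply the redundancy theorem and the compression and verification algorithms of the cited companion papers to produce polynomial, polytime-checkable rDag certificates for non-hamiltonicity, contradicting the assumption via the $\CoNP$-completeness of that problem. The only additions beyond the paper's sketch are your explicit appeals to Fact~\ref{Fact:InvFinito} for the finitely many exceptions and your identification of Theorem~14 and Lemma~15 as the load-bearing external results, both of which are consistent with the paper's presentation.
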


\section{Conclusion}\label{sec:Conclusion}

This article provides yet another proof of $NP=CoNP$. In \cite{BoSL} we have a proof that $NP=NPSPACE$. An immediate consequence of this equality is that $NP=CoNP$. The approach that arises from the results we have shown here does not need Hudelmaier~\cite{Hudelmaier} linearly bounded sequent calculus for \mil logic. The proof reported in~\cite{BoSL}, on the other hand, needs Hudelmaier Sequent Calculus and a translation to Natural Deduction proofs that preserves the linear upper-bound. However, the resulted translation is not normal, and it is well-known that normalization does not preserve upper-bounds in general. Thus, we cannot apply our approach to the whole class of \mil tautologies to prove that $NPSPACE\subseteq NP$, for the use of normal proofs is essential to obtain the redundancy Theorem, i.e., Theorem 14. The compression method reported in \cite{HaeuslerHugeSmall}, uses the redundancy Theorem that is essential to prove \mil short tautologies automatically. It seems easier than the use of the double certificate approach in~\cite{BoSL}.

We observe that the proof is by contradiction. We start by assuming that $\NP\neq\CoNP$, and, using Formulation~\ref{For:NPneqCoNP} we have the existence of $NHam$ as a family of super-polynomial proofs that are linear height-bounded and normal.  Using the redundancy Theorem we obtain, by compression, a polynomial certificate, a Dag Proof, that is polynomial-time verifiable too, having a contradiction.

We show that for any huge proof of a tautology in \mil we obtain a succinct certificate for its validity. Moreover,  we offer an algorithm to check this validity in polynomial time on the certificate's size. We can use this result to provide a compression method to propositional proofs. Moreover, we can efficiently check the compressed proof without uncompressing it. Thus, we have many advantages over traditional compression methods based on strings. The compression ratio of techniques based on collapsing redundancies seems to be bigger, as shown in \cite{FlavioHaeuslerEBL} that reports some experiments with a variation of the Horizontal Compression method compared with Huffman compression. The second and more important advantage is the possibility to check for the validity of the compressed proof without having to uncompress it. In general, the original proof is huge, super-polynomial and hard to check computationally.

A last, technical observation, concerns the fact that the proof, although less intuitive, can also proceed by using the set of all classical tautologies, instead of $NHam$. Consider a classical tautology $\alpha$. By the $\CoNP$-completeness of the set of classical tautologies we have the existence of a graph $G_{\alpha}$ that is non-hamiltonian, and $\beta_{G_{\alpha}}$ is a \mil tautology that has normal proofs linear height-bounded on the size of $k=\size{\beta_{G_{\alpha}}}$. Thus, it has a rDag certificate, polynomial on $k$. Thus, without any loss of generality, we can use the set of all \mil normal linear height-bounded proofs instead of $NHam$.

\section{Acknowledgement}

We would like very much to thank professor Lew Gordeev for the work we have done together and the inspiration to follow this alternative approach. Thank Professor Luiz Carlos Pereira for his support, lessons and ideas on Proof Theory since the first course I have taken with him as a student. Thank the proof-theory group at Tuebingen-University, led by prof. Peter Schroeder-Heister. We want to thank Thomas Piecha and M. Arndt. Many thanks to profs Gilles Dowek (INRIA) and Jean-Baptiste Joinet (univ. Lyon) for the intense interaction during this work's elaboration. Finally, we want to thank all students, former students, and colleagues who discussed with us in many stages during this work. We must have forgotten to mention someone, and we hope we can mend this memory failure in a nearer future. Special thanks go for Alex Vasconcelos Garcia, Christian Renteria and Eduardo Laber, they pointed out many imprecision mistakes in previous versions. The doubts raised from a talk I gave in Université de Paris XIII about one of the earlier versions were important to us to have what we think is a relatively more intuitive proof. We would like to thank Leonardo Moura and Christiano Braga for many helpful suggestions on the way we should conduct this research.


\begin{thebibliography}{9}


%% \bibitem{Church} A.Church.
%% \newblock A set of postulates for the foundations of logic.
%% \newblock {\em Annals of Mathematics}, 33(2):346--366, 1932.
%% \newblock Corrections in \cite{Church1933}.

%% \bibitem{Church1933} A.Church.
%% \newblock A set of postulates for the foundations of logic (second paper).
%% \newblock {\em Annals of Mathematics}, 34(4):839--864, 1933.

\bibitem{Creszenci} Bovet, D. and Crescenzi, P.
  \newblock Introduction to the Theory of Complexity, 
  \newblock Prentice-Hall, 1994.

\bibitem{Arora}
S.~Arora and B.~Barak.
\newblock {\em Computational Complexity: A Modern Approach}.
\newblock Cambridge University Press, 2009.
 
%% \bibitem{Gentzen1936} G.Gentzen.
%% \newblock Untersuchungen \"uber das logische schlie\ss{s}en {I}.
%% \newblock {\em Mathematische Zeitschrift}, 39:176--210, 1935.
%% \newblock English translation in \cite{Gentzen1964}.

%% \bibitem{Gentzen1964} G.Gentzen.
%% \newblock Investigations into logical deduction {I}.
%% \newblock {\em American Philosophical Quarterly}, 1(4):288--306, 1964.
%% \newblock English translation by M. E. Szabo.

\bibitem{Cook2000} Stephen Cook. \emph{The P versus NP problem},  The Millennium Prize Problem, Clay Mathematical Institute, 2000,
   http://citeseerx.ist.psu.edu/viewdoc/summary?doi=10.1.1.98.358

\bibitem{Studia} L.Gordeev and E. H. Haeusler. \emph{Proof Compression and NP versus PSPACE}. Studia Logica 107, 53-83 (2019).

\bibitem{BoSL} L.Gordeev and E. H. Haeusler. \emph{Proof Compression and NP versus PSPACE II}. Bulletin of the Section of Logic, 2020, 18 pages, https://doi.org/10.18778/0138-0680.2020.16.

\bibitem{Addendum} L. Gordeev and E. H. Haeusler. \emph{Proof Compression and NP Versus PSPACE II: Addendum}, CoRR, abs/2011.09262, 2020, in https://arxiv.org/abs/2011.09262v2.  
   
 \bibitem{Haeusler2014} E. H. Haeusler. \emph{Propositional Logics Complexity and the Sub-Formula Property}, 
in Proceedings Tenth International Workshop on Developments in Computational
               Models, {DCM} 2014, Vienna, Austria, 13th July 2014. 

\bibitem{HaeuslerExp2Poly} Edward Hermann Haeusler. \emph{Every super-polynomial proof in purely implicational minimal logic
               has a polynomially sized proof in classical implicational propositional
               logic}, CoRR, abs/1505.06506, 2015, in http://arxiv.org/abs/1505.06506.

\bibitem{HaeuslerHugeSmall} Edward Hermann Haeusler. \emph{Going from the huge to the small: Efficient succinct representation of proofs in Minimal implicational logic}, CoRR, abs/2012.07833, 2020, in https://arxiv.org/abs/2012.07833.

%%\bibitem{Exponential} E. H. Haeusler, \emph{On the Intrinsic Redundancy in Huge Natural Deduction proofs: Analysing Purely Implicational Minimal Logic Proofs}, Submitted to a Journal, 2020.
%%  shorturl.at/cvPX0

\bibitem{SuperPoly} E. H. Haeusler. \emph{On the Intrinsic Redundancy in Huge Natural Deduction proofs II: Analysing $M_{\imply}$ Super-Polynomial Proofs}, CoRR, submmited to Arxiv in August.

\bibitem{Exponential} E. H. Haeusler. \emph{Exponentially Huge Natural Deduction proofs are Redundant: Preliminary results on $M_\supset$}, CoRR, abs/2004.10659, 2020, in https://arxiv.org/abs/2004.10659.
 
\bibitem{exponential} E. H. Haeusler,
\newblock \emph{How Many Times do We Need an Assumption to Prove a Tautology in Minimal Logic? Examples on the Compression Power of Classical Reasoning}.
\newblock Electronic Notes in Theoretical Computer Science, v.315, 2015, pp 31-46.

\bibitem{Hrubes} Pavel Hrubes, \emph{A lower bound for intuitionistic logic}, in Annals of  Pure and Applied Logic, vol. 146, number 1, pages 72--90, 2007, in https://doi.org/10.1016/j.apal.2007.01.001. 

\bibitem{Hudelmaier} Hudelmaier, J\"{o}rg. \emph{An $\mathcal{O}(n \log n)-Space$ Decision Procedure for Intuitionistic Propositional Logic}, Journal of Logic and Computation, volume 3, number 1, pp. 63-75, 1993. 

%% \bibitem{HilbertHis} D.\ Hilbert und W.\ Ackermann, {\it
%% Grundz\"uge der theoretischen Logik.} Julius Springer, Berlin 1928.
%% English version {\it Principles of Mathematical Logic}, Chelsea, New York
%% 1950, ed.\ by R.\ E.\ Luce is based on 2nd German edition, 1938.   

\bibitem{Krajicek} Jan Krajicek. \emph{Bounded Arithmetic, Propositional Logic and Complexity Theory}, 1995, CUP, Cambridge, UK,
  https://doi.org/10.1017/CBO9780511529948

\bibitem{Ladner} Ladner, Richard E.
\newblock \emph{The Computational Complexity of Provability in Systems of Modal Propositional Logic}.
\newblock SIAM J. Comput., n.3, v.6, 1977.

\bibitem{Prawitz}  D. Prawitz, \textbf{Natural deduction: a
  proof-theoretical study}. Almqvist \& Wiksell, 1965

%% \bibitem{LinSpeedUp-Space} J. Rothe, \textbf{Complexity Theory and Cryptology, Texts in Theoretical Computer Science, An EATCS Series},
%% Springer, 2005.


%% \bibitem{Savitch}  W. Savitch, \emph{Relationships between nondeterministic
%%   and deterministic tape complexities}, J. of Computer and System Sciences
%%  (4): 177--192 (1970)
\bibitem{Meseguer} J. Meseguer, \emph{General Logics}, Studies in Logic and the Foundations of Mathematics, volume 129, 275--329, 1989, https://doi.org/10.1016/S0049-237X(08)70132-0.

  
\bibitem{MinimalCompleteness} Segerberg, Krister. \emph{Propositional Logics Related to Heyting's amd Johansson's}, Theoria, 34:26-61.

\bibitem{FlavioHaeuslerEBL} José Fl\'{a}vio Cavalcante Barros Jr  and Edward Hermann Haeusler. \emph{A comparative study on proof compression techniques}, Brazilian Meeting on Logic, 2019, Proceedindgs, pages 85-86, in https://ebl2019.ci.ufpb.br/assets/Book\_of\_Abstracts\_EBL\_2019.pdf. In portuguese.

  \bibitem{Reckhow} Robert, A. Reckhow, \emph{On the lengths of proofs in propositional logics}, PhD Thesis, Department of Computer Science, University of Toronto, 1975.

\bibitem{Statman79}  R. Statman, \emph{Intuitionistic propositional logic is
polynomial-space complete}, Theor. Comp. Sci. (9): 67--72 (1979)

\bibitem{Svejdar}  V. \^{S}vejdar, \emph{On the polynomial-space
completeness of intuitionistic propositional logic}, Archive for Math. Logic
  (42): 711--716 (2003)

  \bibitem{JLC} Uwe Wolter, Alfio Martini and E. H. Haeusler, \emph{Towards a uniform presentation of logical systems by indexed categories and adjoint situations }, Journal of Logic and Computation, volume 25, issue 1, pages 57--93, 2015, in https://doi.org/10.1093/logcom/exs038.
  
\end{thebibliography}
\end{document}